\newcommand{\remove}[1]{}
\newcommand{\newemptyset}{\varnothing}
\newtheorem{definition}{Definition}
\newtheorem{theorem}{Theorem}
\newtheorem{corollary}{Corollary}
\newtheorem{lemma}{Lemma}
\newtheorem{observation}{Observation}
\newif\ifrobocza
\newif\ifbibtex
\newcommand{\tj}[1]{{\color{blue}{#1}}}
\newcommand{\tj}[1]{#1}
\newcommand{\cL}{{\mathcal L}}
\newcommand{\eps}{\varepsilon}
\newcommand{\parent}{\textit{p}}
\newcommand{\partwo}{\text{par}}
\newcommand{\commentt}[1]{}
\begin{document}
\def\thefootnote{\fnsymbol{footnote}}

\title{Optimal-Length Labeling Schemes and Fast Algorithms for $k$-gathering and $k$-broadcasting}
\titlerunning{Labeling Schemes for $k$-gathering and $k$-broadcasting}
%
\author{Adam Ga\'{n}czorz\orcidID{0000-0001-9656-1643} \and Tomasz Jurdzi\'{n}ski\orcidID{0000-0003-1908-9458}}
\authorrunning{A. Ganczorz et al.}
%
\institute{Institute of Computer Science, University of Wroc{\l}aw, Poland.  
\email{adam.ganczorz,tju@cs.uni.wroc.pl}}
\maketitle              

\remove{
\author{Adam Ga\'{n}czorz\footnotemark[1]
	\and Tomasz Jurdzi\'{n}ski\footnotemark[1] 
}

\footnotetext[1]{Institute of Computer Science, University of Wroc{\l}aw, Poland. 
	Emails: {\tt \{adam.ganczorz,tju\}@cs.uni.wroc.pl}. 
	}
}



\newif\iffull
\fullfalse

\begin{abstract}
We consider basic communication tasks in arbitrary radio networks: $k$-broadcasting and $k$-gathering. In the case of $k$-broadcasting messages from $k$ sources have to get to all nodes in the network. 
The goal of $k$-gathering is to collect messages from $k$ source nodes in a designated sink node. We consider these problems in the framework of distributed algorithms with advice. 

Krisko and Miller showed in 2021 that the optimal size of advice for $k$-broadcasting is $\Theta(\min(\log \Delta,$ $ \log k))$, where $\Delta$ is equal to the maximum degree of a vertex of the input communication graph. We show that the same bound $\Theta(\min(\log \Delta, \log k))$ on the size of optimal labeling scheme holds also for the $k$-gathering problems. Moreover, we design fast algorithms for both problems with asymptotically optimal size of advice. For $k$-gathering our algorithm works in at most $D+k$ rounds, where $D$ is the diameter of the communication graph. This time bound is optimal even for centralized algorithms. We apply the $k$-gathering algorithm for $k$-broadcasting to achieve an algorithm working in time $O(D+\log^2 n+k)$ rounds. 
We also exhibit a logarithmic time complexity gap between distributed algorithms with advice of optimal size and distributed algorithms with distinct arbitrary labels.




\keywords{radio network \and distributed algorithms \and algorithms with advice \and labeling scheme \and broadcasting \and gathering}
    
\end{abstract}


\renewcommand{\thefootnote}{\arabic{footnote}}

\section{Introduction}
\label{s:intro}

We consider two fundamental communication tasks in arbitrary radio networks called $k$-broadcasting and $k$-gathering. In the $k$-broadcasting problem messages from $k$ sources has to be delivered to all nodes in the network. The goal of $k$-gathering is to collect messages from $k$ source nodes in a fixed sink node. 

We consider the distributed computing model of radio networks, defined as undirected connected graphs. In the absence of labels, most communication tasks including the problems considered in this paper are infeasible for deterministic algorithms in many networks, due to interferences. Hence we assume that nodes are assigned labels that are, not necessarily different, binary strings. At the beginning of an execution of an algorithm each node knows only its own label and can use it as a parameter in a deterministic algorithm executed simultaneously by all nodes. The size of a labeling scheme is equal to the largest length of a label. Another complexity measure is time understood as the largest number of communication rounds of an execution of an algorithm for given network parameters. 

\subsection{The model and the problems}

We consider radio networks (RN), where a network is modeled as a simple undirected connected graph $G=(V,E)$ with $n=|V|$ nodes. We also distinguish other parameters of the network, $D$ which is equal to the diameter of $G$, and $\Delta$ which is its maximum degree. 
With a small abuse of notation, we sometimes use $D$ to denote the height of a BFS spanning tree of a graph with a fixed root node. Observe however that the height of a BFS tree belongs to the range $[D/2, D]$, so it is of the same asymptotic size as $D$. 

We use square brackets to denote sets of consecutive integers: $[i,j] = \{i, \dots, j\}$ and $[i] = [1, i]$. 
All logarithms are to the base 2.
\remove{For clarity of presentation, we assume
that the number of nodes of a graph $n$ is a power of 2. One can easily generalize all the results for arbitrary $n$, preserving asymptotic efficiency.} 

As usually assumed in the algorithmic literature on radio networks, nodes communicate in synchronous rounds also called steps. 
In each round, a node can either transmit a message to all its neighbors, or stay silent and listen. At the receiving end, a node $v$ hears the message from a neighbor $w$ in a given round, if $v$ listens in this round, and if $w$ is its only neighbor that transmits in the given round. 
\tj{Otherwise, $v$ does not receive any message or information, and, in particular, cannot determine whether or not any of its neighbors transmitted.}
%
The \emph{time} of an algorithm for a given problem is the largest number of rounds it takes to solve it, expressed as a function of the task and various network parameters. 

If nodes are indistinguishable, i.e., in the absence of labels, the problems considered in the paper cannot be solved deterministically. Even $1$-broadcasting and $1$-gathering cannot be solved in the four-cycle due to the symmetry breaking problem. For example, for $1$-broadcasting, the node antipodal to the source cannot get its message because, in any round, either both its neighbors transmit or both are silent. 
Therefore we consider labeled networks, i.e., we assign \emph{labels} to nodes. A {\em labeling scheme} is a function $\cL$ which, for a given network represented by a graph $G=(V,E)$, assigns binary strings to the nodes of the communication graph $G$. 
The string $\cL(v)$ is called the \emph{label} of the node $v$.
Note that labels assigned by a labeling scheme are not necessarily distinct. The \emph{length} or \emph{size} of a labeling scheme $\cL$ is equal to the maximum length of any label assigned by it. Every node knows initially only its label, and can use it as a parameter in the deterministic algorithm executed simultaneously and distributively by all nodes.



Solving distributed network problems with short labels can be seen in the framework of
algorithms with \emph{advice} (i.e., with \emph{labeling schemes}).  
\tj{In this setting, nodes are anonymous and do not know anything about the network. An oracle that knows the network provides advice, i.e., a binary string, to each node before the beginning of the computation.}
A distributed algorithm executed at nodes uses these strings (each node uses its own advice string) to solve the problem efficiently.
The required size of advice equal to the maximum length of the strings can be seen as a measure of the difficulty of the problem.
In the radio network literature regarding algorithms with advice different strings may be given to different nodes \cite{EllenG20,DBLP:conf/spaa/EllenGMP19,DBLP:journals/mst/FraigniaudKL10,DBLP:journals/iandc/FuscoPP16} and we also make this assumption in the present paper. 
The scenario of the same advice given to all nodes would be useless in the case of deterministic algorithms in radio networks: no deterministic communication could occur.
If advice strings may be different, they can be considered as labels assigned to nodes.
Such labeling schemes permitting to solve a given network task efficiently are also called {\em informative labeling schemes}.


We now formally define communication problems considered in this paper. 
Let $G=(V,E)$ be the communication graph of a radio network.
The \emph{$k$-{gathering}} problem assumes that there are $k$ nodes $s_1,\dots,s_k \in V$, called {\em sources}, and one node $s\in V$, called the sink. Each source has a message, and all messages have to reach the \emph{sink}.
We distinguish the unit-message $k$-broadcasting and the multi-message $k$-broadcasting.
In both cases there are $k$ nodes $s_1,\dots,s_k \in V$, called {\em sources}.
In the case of \emph{unit-message $k$-broadcasting} all the sources know the same broadcast message $M$ while in the case of \emph{multi-message $k$-broadcasting} the nodes $s_1,\ldots,s_k$ initially have  possibly distinct messages $M_1,\ldots,M_k$. The goal is to deliver the message $M$ (in the unit-message case) or  all the messages $M_1,\ldots,M_k$ (for the multi-message $k$-broadcasting) to  all nodes of the communication graph $G(V,E)$. 
The unit-message $k$-broadcasting problem with $k=1$ is just called the \emph{broadcasting problem}.

We assume the \emph{non-spontaneous wake-up} model. 
In this model only source nodes start an execution of an algorithm at the fixed round $0$  and each other node might join the execution only after receiving a message from other node. 
This
model is in general more demanding than the \emph{spontaneous wake-up} model, where all nodes start an execution of an algorithm at the same round (see e.g. \cite{DBLP:journals/jacm/CzumajD21}).
However, as all problems considered in this paper assume that each node is
either awaken already at the round $0$ of the algorithm 
or it can
only be awaken by a message from another node, we can think of the number of rounds elapsed from the round $0$ of an execution of an algorithm as the current value of a \emph{central clock}. Then, we can add the current value of the central
clock to each transmitted message and guarantee 
that each node learns the state of the central clock.

%
As usual in algorithmic literature concerning radio networks we assume
that many messages can be combined together into a single packet transmitted in a round, so a node can transmit several original messages $m_i$ in one round. Thus, there is no restriction on the size of messages sent by nodes.

The ultimate goal for the framework with advice is to design algorithms which use 
optimal size of labels and simultaneously work in optimal time.
%


\subsection{Our results}


 We present an algorithm for $k$-gathering, using a labeling scheme of length $O(\min(\log k,$ $\log \Delta))$, and running in at most $D + k$ rounds. We show that the length of our labeling scheme is optimal 
 and 
 the time 
 is also asymptotically optimal, even for centralized algorithms
 in which the 
 topology of the communication graph is \textit{known} to nodes. 
 For 
 $k$ such that $k=\omega(D\Delta)$,\footnote{As an example of such a case one may consider a graph with a spanning tree of height $O(\log \Delta)$. Then $D$ is also $O(\log \Delta)$, while $k$ might be even equal to $n$ if each node is a source of a message. And thus $O(D\Delta)=o(k)$ for wide range of values of $\Delta$.} we provide a $k$-gathering algorithm which works in time $O(D\Delta)$. Thus, we 
 accomplish 
 $k$-gathering in $O(\min(D+k,D\Delta))$ rounds.
\tj{By combining $k$-gathering and broadcasting, we get an algorithm for the multi-message $k$-broadcasting problem. 
Thanks to the $O(D+\log^2 n)$-round algorithm for broadcasting with $O(1)$ labels \cite{DBLP:journals/corr/abs-2410-07382} and $O(D+k)$-round $k$-gathering with advice of size $O(\min(\log k,\log \Delta))$ provided here, we get a $O(D+\log^2 n+k)$ time algorithm for multi-message $k$-broadcasting with labels of optimal size $O(\min(\log k,\log \Delta))$, where the optimality of the size of labels follows from \cite{DBLP:conf/wg/KriskoM21}. As for the unit-message broadcasting problem, our reduction to the standard broadcasting problem gives the algorithm with $O(1)$ length of labels and $O(D+ \log^2 n)$ rounds.}
	
 Thus, our solutions for the considered communication tasks use labels of asymptotically optimal length. Moreover  distributed algorithms for $k$-gathering using those labels work in optimal time, even for algorithms working in the scenario with known topology of the input graph (to each node).

\subsection{Related work}

The most studied related tasks 
are broadcasting and gossiping. Here gossiping is the problem in which each node has its own message and all messages have to be distributed to all nodes.
%
Optimal time centralized broadcasting algorithms were given in \cite{DBLP:journals/dc/GasieniecPX07,KP-DC-07} and the best known gossiping time (without any extra assumptions on parameters) follows from \cite{DBLP:journals/dc/GasieniecPX07}. 
They also consider gossiping for which they design an algorithm which works in $O(D+\Delta\log n)$ rounds.
For large values of $\Delta$, the gossiping from \cite{DBLP:journals/dc/GasieniecPX07} was later improved in \cite{CMX}.  
The best known deterministic distributed algorithm for broadcasting which time that depends only on $n$ is $O(n\log n\log\log n)$ \cite{DeMarco}, later improved in \cite{CD} for some values of parameters $D$ and $\Delta$. For gossiping, the best known time  in directed strongly connected graphs was given in \cite{GL,GRX} and 
for undirected graphs 
in \cite{Vaya}. 
The 
$n$-gathering 
problem in asymmetric directed graphs \cite{DBLP:journals/iandc/ChrobakCG21} and trees was considered in \cite{DBLP:journals/algorithmica/ChrobakC18,DBLP:journals/iandc/ChrobakCGK18}.


The paradigm of algorithms with advice has been applied to many different distributed network tasks: finding a minimum spanning tree  \cite{DBLP:journals/mst/FraigniaudKL10}, finding the topology of the network \cite{DBLP:journals/iandc/FuscoPP16,DBLP:journals/iandc/GanczorzJLP23,DBLP:journals/tcs/GorainP21}, 
\tj{determining the size of a network \cite{DBLP:journals/iandc/GanczorzJLP23,DBLP:journals/tcs/GorainP21a}, single-message broadcasting \cite{DBLP:journals/tcs/IlcinkasKP10},}
and leader election \cite{DBLP:journals/talg/GlacetMP17}. In \tj{\cite{DBLP:conf/netys/BuPR20,EllenG20,DBLP:conf/spaa/EllenGMP19,DBLP:journals/corr/abs-2410-07382},} the broadcasting algorithm with advice of size $O(1)$ were provided working in time $O(n)$, $O(D\log n+\log^2n)$ and optimal $O(D+\log^2n)$, respectively. 
%

Krisko and Miller \cite{DBLP:conf/wg/KriskoM21} were the first who considered the $k$-broadcasting problem for radio networks with advice. They show that $\Theta(\min(\log k,\log \Delta))$ is the optimal size of labels and they design a $k$-broadcasting algorithm for general graphs and for trees without optimizing time of the algorithms. \tj{Bu et al.\ \cite{DBLP:journals/tcs/BuLPR23} introduced the study of the labeling for the the convergast problem which corresponds to the $n$-gathering problem. They however mainly consider radio networks with collision detection, a model which is stronger than the one considered in this paper. They also provide a solution which works in our model in $\Theta(n)$ time and $\Theta(\log n)$-bit labels, which matches the extreme (and simple) case of $k$-gathering with $k=n$.}

\subsection{Organization of the paper}
In Section~\ref{sec:k:gathering} 
we present 
$k$-gathering algorithms. 
Section~\ref{sec:k:broadcast} 
regards the $k$-broadcasting problem.
We discuss further research directions in Section~\ref{sec:summary}.

\remove{
\section{High-level description of our results}\label{sec:highlevel}

\subsection{The $k$-gathering problem}
In order to perform $k$-gathering quickly, let us think that we try to transmit all messages to the sink node simultaneously, hoping that no collisions occur. Then, we could deliver all messages in at most $D$ rounds. 
In order to deal with prospective collisions, we will try to trade necessity of a slowdown 
in order to avoid collisions
with ability to combine more messages in one node. A rough idea is as follows. If transmissions from $u$ to $u'$ and from $v$ to $v'$ collide, i.e., there is an edge $(u,v')$ or $(v,u')$, we perform transmissions of $u$ and $v$ in separate rounds. W.l.o.g.\ assume that there is the edge $(u,v')$ in the communication graph.
In order to compensate for this slowdown, we combine messages collected up to the current round in $u$ and $v$ together, in the node $v'$. Then they are transmitted from $v'$ together as one packet on a single path from $v'$ to the sink. This idea helps to build a $O(D+k)$-round $k$-gathering with $O(\log k)$-bit labels. However, in the case when $k\gg \Delta$, this size of labels is not the optimal $O\left(\min(\log k,\log \Delta)\right)$. In this case, we cannot encode the value of $k$ in labels while our algorithm requires knowledge of this value in order to coordinate distributed executions of the algorithm. In order to deal with this case, we switch between the transmission pattern designed for the case $k\ge \Delta$ and a $2$-hop $(\Delta^2+1)$-coloring allowing for collision-free transmission in radio networks.

For the case of large $k$ of the order $\omega(D \Delta)$, we design a $k$-gathering algorithm which works in $O(D \Delta)$ rounds with labels of size $O(\log \Delta)$. As we apply this algorithm for $k=\omega(D\Delta)$, the size of labels corresponds to the optimal $O(\min(\log k, \log\Delta))$.
The $O(D\Delta)$-round algorithm is based on a kind of a pseudo-coloring of a BFS tree of a graph,
such that for each node on a specific level $l$ of the tree, all its neighbors on the level $l+1$ have different colors.


\subsection{Algorithms for the $k$-broadcasting problems}

We argue that the unit-message $k$-broadcasting can be reduced to the standard broadcasting problem with one source.
The multi-message $k$-broadcasting can be solved through $k$-gathering followed by an execution of the standard broadcasting algorithm.
%
}

\section{ The $k$-gathering problem}
\label{sec:k:gathering}

In this section we consider 
$k$-gathering. 
Section~\ref{subsec:kgather:lower} gives a lower bound $k+D$ on the time needed 
for $k$-gathering and a lower bound
$\min\left( \log k, \log \Delta \right)$ on the length of a labeling scheme.
In Section~\ref{subsec:gathering:tools}, we introduce some technical tools 
used in a $k$-gathering algorithm that works in time $O(D+k)$, presented in Section~\ref{subsec:k:gathering}. 
In Section \ref{subsec:k:gather:big:k} we present an algorithm which works in $O(D\Delta)$ rounds. 
%
Finally, in Section~\ref{subsec:k:gather:sum}, we summarize 
the results 
and show a logarithmic gap in time complexity between algorithms with and without advice.

\subsection{Lower bounds for the $k$-gathering problem}\label{subsec:kgather:lower}
In this section we show that $D+k$ is a lower bound on the number of rounds needed to accomplish $k$-gathering, and that $\min\left( \log\Delta,\log k\right)$ is a lower bound on the length of a labeling scheme sufficient to accomplish this task. 

Consider the graph $G_{D,p}$ with 
nodes $V=\{v_1,v_2,\ldots, v_{D+p}\}$,
and 
the 
edges
\newcommand{\figkgatheringlower}{
\begin{figure}[h]
        \centering
        \includegraphics[width=0.85\linewidth]{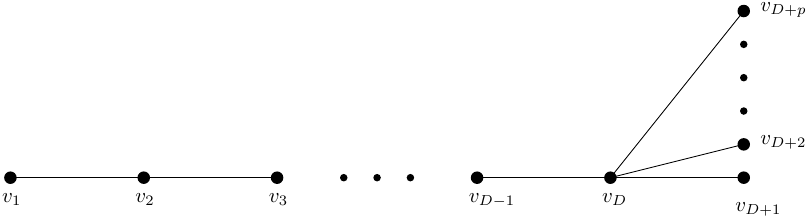}
        \caption{Illustration for the lower bounds on $k$-gathering.}
        \label{fig:kgathering-lower}
\end{figure}    
}
\iffull
$\{(v_i,v_{i+1})\,|\, i\in[1,D]\}\cup \{(v_{D}, v_j)\,|\, j\in[D+1,D+p]\}$ 
(see Figure~\ref{fig:kgathering-lower}).
\else
$\{(v_i,v_{i+1})\,|\, i\in[1,D-1]\}\cup \{(v_{D}, v_j)\,|\, j\in[D+1,D+p]\}$ (see Appendix~\ref{sec:fig:kgathering:lower}).
\iffull
\figkgatheringlower
\fi

Assume that $p=k$, $k$ input messages are originally located in the nodes $v_{D+1},\ldots,v_{D+k}$
\iffull{(see Figure~\ref{fig:kgathering-lower})}\fi and the messages should be gathered in $v_1$. As the unique path to $v_1$ of each message goes through $v_D$, we see that at least $k$ rounds are needed in order to transfer all  messages from their source nodes to $v_D$. Then, an additional $D-1$ rounds are necessary to pass any message from $v_D$ to $v_1$. These observations imply the lower bound $D+k$ on the number of rounds for 
$k$-gathering.

For the lower bound on the size of labels, look again at the 
\iffull
graph from Figure~\ref{fig:kgathering-lower}.
\else
graph.
\fi
Firstly assume that $p=k$, the source messages are originally located in the nodes $v_{D+1},\ldots,v_{D+k}$ and they should be gathered at $v_1$.
Then we need the labels of size $\log k$ in order to guarantee that all nodes $v_{D+1},\ldots,v_{D+k}$ have distinct labels necessary for delivery of the source messages to $v_1$. 
Indeed, if two nodes $v_{D+i}$ and $v_{D+j}$ for $i\neq j$, $i,j\in[k]$ have equal labels and they are connected only to $v_D$, then both of them either transmit or listen in each round. Thus, none of them transmits its message successfully to its only neighbor $v_D$ and therefore the given instance of $k$-gathering problem can not be accomplished.
Secondly, if $p=\Delta<k$ and $\Delta$ of the $k$ source messages are located originally in $v_{D+1},\ldots,v_{D+\Delta}$, the labels of size $\log \Delta$ are necessary.
            

The above considerations lead to the following conclusion. 

\begin{theorem}\label{th:kgathering:lower}
    The $k$-gathering task requires at least $k+D$ rounds and a labeling scheme of size
    $\min\left( \log k, \log \Delta \right)$.
\end{theorem}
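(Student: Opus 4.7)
The plan is to exhibit a single family of graphs, the ``broom'' graphs $G_{D,p}$ already introduced before the theorem statement (a path $v_1,\dots,v_D$ with $p$ extra leaves $v_{D+1},\dots,v_{D+p}$ attached to $v_D$), and to instantiate the parameter $p$ differently for the two lower bounds. Since this graph has diameter $\Theta(D)$ and maximum degree $p+1$, it lets us tune $\Delta$ independently of $D$, which is what is needed to get a $\min(\log k,\log\Delta)$ bound.

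For the $k+D$ time lower bound I would place $p=k$ sources on the leaves and make $v_1$ the sink. The only route from any leaf to the sink passes through $v_D$, and $v_D$ can successfully receive at most one message per round (any round in which two or more leaves transmit produces a collision at $v_D$). Hence at least $k$ rounds are required for all source messages to arrive at $v_D$. After that, the path $v_D,v_{D-1},\dots,v_1$ forces another $D-1$ rounds before a message can reach the sink. Adding these disjoint phases, plus using the trivial observation that these phases cannot fully overlap because the second phase cannot begin before the first message arrives at $v_D$, gives the $k+D$ lower bound. This argument works even in a centralized model, since it is information-theoretic about the graph structure.

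For the label lower bound I would argue by a standard indistinguishability/symmetry argument. Take $p=\min(k,\Delta)$ and put one source message on each of the $p$ leaves $v_{D+1},\dots,v_{D+p}$. Suppose two leaves $v_{D+i}$ and $v_{D+j}$ receive the same label. Since each leaf's only neighbor is $v_D$, and no leaf ever receives any message before it transmits (it has a single neighbor $v_D$, which only transmits after it hears something from a leaf), the two leaves execute identical deterministic programs on identical inputs in every round, producing the same transmit/listen choice in each round. Hence in every round in which they transmit, they transmit simultaneously and their signals collide at $v_D$, so neither message ever reaches $v_D$, let alone the sink. This forces all $p$ leaf labels to be distinct, and thus the labeling scheme must use length at least $\log p = \log\min(k,\Delta) = \min(\log k,\log \Delta)$.

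The two instantiations ($p=k$ for the time bound, $p=\min(k,\Delta)$ for the label bound) combine to give the theorem. The only subtlety, and the place I would write carefully, is the symmetry argument: I must justify that the two identically labeled leaves really do see identical histories throughout the whole execution, not only at round $0$. This is easy here because a leaf's view is determined entirely by its own transmissions and by whatever it hears from $v_D$, and by induction on the round number one shows that the two leaves have identical views at the start of every round (since any message from $v_D$ is heard by both simultaneously, and their own emissions are by inductive hypothesis the same). I would make this induction explicit to keep the argument rigorous.
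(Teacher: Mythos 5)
Your proposal is correct and follows essentially the same route as the paper: the same broom graph $G_{D,p}$, with $p=k$ for the $k+D$ time bound (collision bottleneck at $v_D$ followed by the $D-1$-hop path) and $p=\min(k,\Delta)$ for the label bound via indistinguishability of identically labeled leaves. Your explicit induction showing the two leaves maintain identical views throughout the execution is a slightly more careful write-up of the symmetry argument the paper states informally, but it is not a different approach.
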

We note that the above lower bound on the size of labels could be also obtained by a reduction of the $k$-broadcasting problem from
\cite{DBLP:conf/wg/KriskoM21} to a composition of a $k$-gathering and standard broadcasting and the fact that broadcasting can be 
done with $O(1)$-bit labels.
The lower bound on $k$-broadcasting is given 
in \cite{DBLP:conf/wg/KriskoM21}.

\subsection{Tools for $k$-gathering algorithms}\label{subsec:gathering:tools}
\begin{definition}[Children-parents assignment]\label{def:cpa}
Let $G=(V,E)$ be a 
graph and let:\\
--~$X=\{x_1,\ldots, x_p\}$ such that $x_i\in V$ and $x_i\neq x_j$ for each $i\neq j$ be a set called \emph{children}, \\
%
--~$Y=\{y_1,\ldots,y_r\}\subseteq$ $V\setminus \{x_1,\ldots,x_p\}$ such that $(x_i,y_j)\in E$ for each $i\in[p]$ and 
\tj{at least one $j\in[r]$}
be a set called \emph{parents},\\ 
%
--~$\partwo\,:\, X\to Y$ such that for each $y\in Y$, there exists $x\in X$ 
    such that $(x,y)\in E$ and 
    $\partwo(x)=y$ \tj{(i.e., $\partwo$ is a surjective function).\\}
Then, we say that $(X,Y, \partwo)$ is a \emph{children-parents assignment}, $x$ is a \emph{child} of $y$ and $y$ is the \emph{parent} of $x$ iff $\partwo(x)=y$.
The function $\partwo$ is called a \emph{parent} function.
\end{definition}
{In the following part of the paper we use the notion of a \emph{schedule} in order to describe some distributed algorithms for radio networks with advice. Formally, a \emph{schedule} $S$ might be seen as an assignment of all possible labels of nodes to 0/1 sequences of a fixed size $T$, where $S_\lambda$ denotes the 0/1 sequence assigned to the label $\lambda$. Such the schedule $S$ is understood as an algorithm in radio networks, where each node with the label $\lambda$ transmits in the round $i$ of the algorithm corresponding to $S$ iff the value of the $i$th bit of $S_\lambda$ is equal to $1$. In general, the assumptions of the non-spontaneous wake-up model might prevent a possibility of an execution of a schedule as a distributed algorithm. However, as we explained earlier, all problems considered in this paper assume that each node is either awaken already at the first round of the algorithm (if it is a source of a message) or it can only be awaken by a message from another node. Thus, we can add the current value of the central clock to each message and guarantee in this way that each node learns the state of this central clock before it is supposed to send its first message.}
\begin{definition}\label{def:cps}
A schedule $S$ is called an $(X,Y,\partwo)$ children-parents schedule 
in a graph $G(V,E)$ for a children-parents assignment $(X,Y,\partwo)$
iff $x$ successfully sends a message to $\partwo(x)$ for each $x\in X$ during an execution of $S$ in $G$.
\end{definition}

Our ultimate goal is to use children-parents schedules as stages of $k$-gathering algorithms guaranteeing that each message gets closer to the sink node for the $k$-gathering task. Therefore, we would like to make progress towards finalizing $k$-gathering in a children-parents schedule either by minimizing the size of the schedule or by decreasing dispersion of messages measured by the number of nodes in which the messages to be gathered are located (here, decreasing dispersion corresponds to decreasing the number of parent nodes).
To this aim, 
we will show that there is a way to 
rearrange a children-parents assignment such that the sum of the sizes of the optimal children-parents schedule and the number of 
\tj{target}
nodes $|Y|$ is bounded from above.

\begin{lemma}\label{l:level}
	Let $(X,Y,\partwo)$ be a children-parents assignment in a graph $G(V,E)$. Then, there exists
	$Y_\star\subseteq Y$, $\partwo_{\star}\, : \,X\to Y_\star$ and a schedule $S$ such that \\
        --~$(X,Y_\star, \partwo_\star)$ is a children-parents assignment in $G$,\\
        --~$S$ is a $(X,Y_\star,\partwo_\star)$ children-parents schedule, \tj{such that each node from $X$ transmits exactly once in $S$,}\\
        --~$|Y_\star| \leq |X|- |S| +1$, i.e., $|Y_\star| + |S|\leq |X| +1$.
\end{lemma}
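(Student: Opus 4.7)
My plan is to prove the lemma by induction on $|X|$.

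The base case $|X|=1$ is immediate: pick any $y\in N_Y(x)$ for the unique $x\in X$, set $Y_\star=\{y\}$ and $\partwo_\star(x)=y$, and let $S$ consist of a single round in which $x$ transmits. Then $|Y_\star|+|S|=2=|X|+1$.

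For the inductive step I would split into two cases according to whether some parent has more than one child-neighbor. If every $y\in Y$ satisfies $|N_X(y)|=1$, then surjectivity of $\partwo$ forces a bijection, so $|X|=|Y|$; taking $\partwo_\star=\partwo$, all children transmit simultaneously in a single round with no collision, since each target parent has only its assigned child as an $X$-neighbor. This yields $|Y_\star|+|S|=|X|+1$ exactly. Otherwise some $y_0\in Y$ has $|N_X(y_0)|\ge 2$. I would pick $x_0\in N_X(y_0)$ and apply the inductive hypothesis to the reduced children-parents assignment on $X\setminus\{x_0\}$ (adjusting $Y$ and $\partwo$ to stay valid), obtaining $(Y'_\star,\partwo'_\star,S')$ with $|Y'_\star|+|S'|\le |X|$.

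I then re-insert $x_0$ using one of two moves, each adding at most $1$ to the total $|Y_\star|+|S|$. In move (a), if $x_0$ has a neighbor $y\in Y'_\star$, set $\partwo_\star(x_0)=y$ and append a new round containing only $x_0$; this pays $+1$ on $|S|$ and $0$ on $|Y_\star|$. In move (b), if $N_Y(x_0)\cap Y'_\star=\emptyset$, then $x_0$ cannot disturb the targets of existing transmitters (their targets lie in $Y'_\star$, disjoint from $N_Y(x_0)$); I set $\partwo_\star(x_0)=y_0$ and insert $x_0$ into an existing round in which $y_0$ has no transmitter-neighbor, paying $+1$ on $|Y_\star|$ and $0$ on $|S|$.

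The main obstacle is ensuring move (b) succeeds when move (a) is unavailable, since it may happen that every existing round already contains an $X$-neighbor of $y_0$. My plan is to handle this by strengthening the inductive hypothesis so that the recursive call additionally guarantees $y_0\in Y'_\star$. This strengthening is legitimate: since $|N_X(y_0)|\ge 2$, there is $x_1\in N_X(y_0)\setminus\{x_0\}$ surviving the removal, and we can force $\partwo'(x_1)=y_0$ before invoking the inductive hypothesis, which guarantees $y_0$ appears in $Y'_\star$. Under the strengthened hypothesis, case (b) collapses into case (a): move (a) uses $y=y_0$, and a new round is appended for $x_0$ alone. The routine but somewhat delicate technical task is to verify that the strengthened statement is preserved through the induction, which follows from the same case analysis applied to the designated parent.
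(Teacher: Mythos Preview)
Your proposed fix---strengthening the inductive hypothesis so that a designated parent $y_0$ is guaranteed to lie in $Y'_\star$---does not go through, because that strengthened statement is \emph{false}. Take $X=\{x_1,x_2,x_3\}$, $Y=\{y_d,y_0\}$, with edges $(x_1,y_d),(x_1,y_0),(x_2,y_0),(x_3,y_0)$ and $\partwo(x_1)=y_d$, $\partwo(x_2)=\partwo(x_3)=y_0$. If you designate $y_d$ and insist $y_d\in Y_\star$, then $\partwo_\star(x_1)=y_d$ is forced (as $x_1$ is $y_d$'s only $X$-neighbour) and $\partwo_\star(x_2)=\partwo_\star(x_3)=y_0$ is forced (as $y_0$ is their only $Y$-neighbour), so $|Y_\star|=2$. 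But $x_1,x_2,x_3$ are all adjacent to $y_0$, so the rounds in which $x_2$ and $x_3$ deliver to $y_0$ must each be free of the other two; hence $|S|\ge 3$ and $|Y_\star|+|S|\ge 5>|X|+1$. Thus the ``routine but somewhat delicate'' verification you defer is in fact impossible, and the obstacle you identified in move~(b) is real and unresolved. (Merely setting $\partwo'(x_1)=y_0$ before the recursive call does nothing: the lemma lets $\partwo'_\star$ reassign parents arbitrarily.)

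The paper's proof sidesteps this by a different inductive step: instead of removing a single child $x_0$, it fixes a parent $y_k$ of degree $d$ and removes \emph{all} of its $X$-neighbours at once, applying the hypothesis to the remainder $X'$. The $d$ removed children are then re-inserted via a clean dichotomy: either some removed child has no edge into $Y'_\star$ (so it can piggyback on an existing round of $S'$ while the other $d-1$ get new solo rounds, adding $y_k$ to $Y_\star$), or every removed child has an edge into $Y'_\star$ (so each is reassigned to an existing parent and gets its own new round, with $Y_\star=Y'_\star$). Removing the whole neighbourhood of $y_k$ is exactly what makes this dichotomy exhaustive without any designated-parent strengthening.
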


\begin{proof}

We will proceed by induction on the size of $X$. 
 
Consider the base case that $|X|=0$. Then $S$ is an empty schedule. Thus $Y_\star=Y=\newemptyset$,   $|Y_\star|=|Y|=0$, and therefore $|Y_\star|\le |X|$ by Def.~\ref{def:cpa} and $|S|=0$. These inequalities imply that    $|Y_\star|\le |X|-|S|+1$ as required.
	
Now assume that the lemma holds for all children-parents assignments such that the number of children is smaller than $k>0$.
 Let us take any children-parents assignment $(X,Y,\partwo)$ and let $|X|=k$. Let us look at the subgraph of $G$ induced on $X\cup Y$. 
Let $y_k \in Y$ be a node with some degree $d>0$, i.e., $\deg(y_k) = d$. Let $X' = \{x \in X| (x, y_k) \notin E\}$ and $Y' = \{y | \exists x \in X'\  par(x) = y\}$ (see Figure~\ref{fig:ch:par:schedule}).
        \begin{figure}
        \centering
        \includegraphics[width=0.75\linewidth]{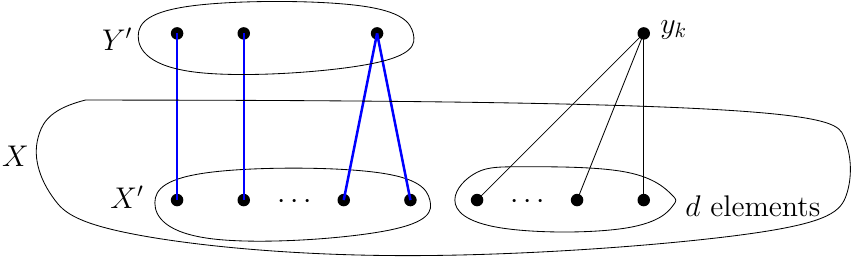}
        \caption{Inductive step in the proof of  Lemma~\ref{l:level}, Case~1. Segments denote edges, fat blue segments correspond to edges connecting elements of $X'$ with their parents. There are no edges connecting nodes from $X'$ with $y_k$ in $G(V,E)$. }
        \label{fig:ch:par:schedule}
        \end{figure}        
 Let $\partwo'$ be the function $\partwo$ with the domain reduced to $X'$. Then $(X',Y',\partwo')$ is a correct children-parents assignment with $|X'| = |X| -d$,$|Y'| \le |Y|-1$. From the inductive hypothesis we get a children-parents assignment $(X',Y'_\star,\partwo'_\star)$ for $Y'_\star \subseteq Y'$ and a $(X',Y'_\star,\partwo'_\star)$ children-parents schedule $S'$ such that  $|Y'_\star| \leq |X'|-|S'|+1$.
	
\noindent Now let us look at two cases:
    
    \noindent \textbf{Case 1:} There is a child $x$ of $y_k$ such that, for each node $y\in Y'_\star$, there is no edge $(x, y)$ in $G$.\\ Then the node $x$ does not collide with any node in $X'$ at any element of $Y'_\star\cup\{y_k\}$ as all nodes of $X'$ are not connected with $y_k$ and $x$ is not connected with nodes of $Y'_\star$ according to our assumption. Let us modify $S'$ in the following way.
        Firstly, we send a message of $x$ in the first round of the schedule $S'$ in parallel of other transmission of $S'$ of that round. According to our above assumption and reasoning, transmission of $x$ is successfully delivered to $y_k$ and it does not interfere (does not causes collisions) with any transmission of other nodes in the first round of $S'$.
        Secondly, after $S'$, we assign a separate round for each of the remaining children of $y_k$ in order to allow them to transmit without collisions. That gives us a schedule $S$ of length $|S| \leq |S'| + d-1$ as $y_k$ has $d>0$ children. 
        Moreover, we obtain the new set of parents $Y_\star=Y'_\star\cup\{y_k\}$ 
        and therefore 
        \tj{$|Y_\star| = |Y'_\star| + 1 \leq |X'| - |S'| + 1 + 1 \le \left(|X| - d\right) - (|S| - d + 1) + 2 = |X| - |S| + 1,$}
        where the first equality follows from the construction of $Y'_\star$, the first inequality follows from the inductive hypothesis, the second equality follows from the relationships $|X'|=|X|-d$ and $|S'|=|S|-d+1$ which in turn follow from our construction of $X'$ and $S'$.

\noindent \textbf{Case 2:} Each child  $x_i$ of $y_k$  has an edge to some $y'_i \in Y'_\star$ for all nodes $i \in [d]$.\\ Then we will assign $y'_i$ as the (possibly different than according to the function par) parent $\text{par}'(x_i)$ of $x_i$ for each $i\in[d]$. This enables us to avoid using $y_k$ or any other node outside of $Y'_\star$ as a parent of any node. Our schedule $S$ will be then just $S'$ followed by $d$ rounds such that the $i$th neighbor of $y_k$ is the only transmitter in the round $i$. Then the size of $S$ is $|S|=|S'|+d$ and $|Y_\star| = |Y'_\star| \leq |X'| -|S'| + 1 = |X| - d - (|S| - d)  +1 = |X|-|S| +1,$
        where the inequality follows from the inductive assumption and the equalities follow from the construction. In this way we finish the inductive proof of the lemma.
\qed
\end{proof}

\subsection{The $k$-gathering in $D+k$ rounds}\label{subsec:k:gathering}

In this section 
we design an algorithm which accomplishes $k$-gathering
 in $D+k$ rounds with labels of size $O\left(\min(\log\Delta,\log k)\right)$.
The idea of the algorithm is as follows. 
Let $T$ be a BFS tree rooted at the sink node of a given instance of the $k$-gathering problem. The algorithm will work in phases. In the first phase we will build a children-parents assignment $A$, along with a children-parents schedule $S$ on leaves of $T$ (containing a message) as children, and their parents
in $T$. 
\tj{Thus, after an execution of $S$, the parent nodes from $A$ will get all messages from the children. 
More precisely, we will actually execute a schedule $S'$ based on $S$ that ensures that, for each leaf $v$ in $T$, each message located in $v$ is received by at least one parent of a leaf in $T$ (not necessarily by the parent of $v$ as assigned by the par function of the children-parent assignment $A$).}

After the first phase, we start the second phase which now is a children-parents schedule with children equal to the leaves of $T'$, where $T'$ is a subtree of $T$ obtained by removing
the leaves of $T$.
The tricky parts 
are in the encodings of behaviors of nodes in the schedules using the labels of size $O(\min(\log \Delta, \log k))$.
Other non-obvious observations are needed in the evaluation of time of the composition of the 
children-parents schedules, which requires amortized analysis. 

First, we present a centralized implementation of the above 
strategy. Then, 
an optimal labeling scheme is presented.
Finally, 
a distributed algorithm simulating the centralized one and using assigned labels is described and analyzed.

\subsubsection{Centralized transmission scheme for $k$-gathering in $D+k$ rounds}
\label{s:gather:dplusk:offline}

In this section we describe more precisely the centralized algorithm, which will serve as a base for a distributed algorithm.

The algorithm works in phases.
In the $i$th phase, we will create a children-parents schedule $S_i$. Let $d(v)$ for each node $v$ be the distance from $v$ to the sink node $s$.
Let $T_0$ be a tree consisting of shortest paths from source nodes to the sink node $s$. 
Then, let $X_0$ be the set of leaves of $T_0$, let $Y_0$ be the set of parents of the nodes from $X_0$
and $\partwo_0$ be the corresponding parent function. Let $S_0$, $Y'_0$, $\partwo'_0$ be the schedule, the parent set, and the parent function obtained by an application of Lemma~\ref{l:level} for the children-parents assignment $(X_0,Y_0,\partwo_0)$. Moreover, let $T_1$ be a subtree of $T_0$ obtained by removing the set $X_0$ of leaves of $T_0$ -- see Figure~\ref{fig:par-schedule-stage} in App.~\ref{sec:app:figure} for an example illustration.
If some nodes become leaves without any message after this removal, we remove them from the tree. 
\newcommand{\figparschedulestage}{
        \begin{figure}[h]
        \centering
        \includegraphics[width=0.85\linewidth]{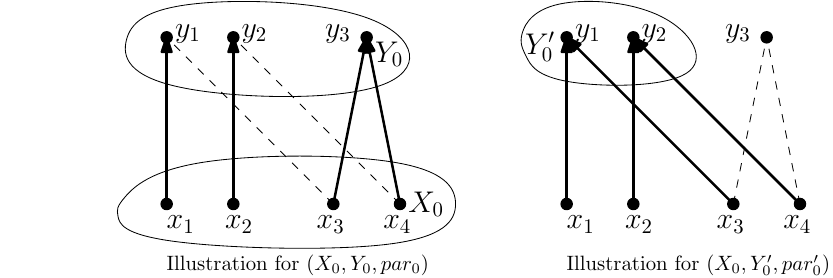}
        \caption{An example for $(X_0,Y_0,\partwo_0)$ and $(X_0,Y'_0,\partwo'_0)$. Fat edges with arrows connect children with their parents. 
        Dashed edges do not connect parents and children. The schedule $S_0$ might be such that $x_1$ and $x_2$ transmit simultaneously in the first round, $x_3$ and $x_4$ transmit simultaneously in the second round. \tj{(Note that the parents of $x_3$ and $x_4$ in $\partwo'_0$ are $y_1$ and $y_2$, respectively.)} If messages are originally located only in $x_1,\ldots,x_4$ then $x_1,\ldots,x_4$ as well as $y_3$ are removed from $T_0$ in order to obtain $T_1$.}
        \label{fig:par-schedule-stage}
        \end{figure}        
}

For $i>0$, let $Z_i$ be the set of source nodes of $T_i$ outside of $Y'_{i-1}$.
\tj{Then, let $\text{Leaves}(T_i)$ be the set of leaves in $T_i$, let   $X_i=\text{Leaves}(T_i)\cap\left(Y'_{i-1}\cup Z_i\right)$,}
 let $Y_i$ be the set of parents of $X_i$ \tj{in $T_i$} and $\partwo_i$ be the corresponding parent function.  Then let $S_i$, $Y'_i$, $\partwo'_i$ be the schedule, the parent set, and the parent function respectively obtained by an application of Lemma~\ref{l:level} for $(X_i,Y_i,\partwo_i)$. \tj{Then remove $X_i$ from $T_i$ to form $T_{i+1}$, and also remove nodes that become leaves without any message after this removal.}

The resulting centralized scheme for $k$-gathering  $S$ would be thus a concatenation of 
\tj{schedules}
$S_0, S_1,\dots, S_{m-1}$. Below we prove that such $S$ is actually a $k$-gathering schedule and that its time is at most $D+k$.

\begin{lemma}
    The broadcast schedule $S$ is a correct schedule for $k$-gathering. The length of $S$ is at most $D+k$.
\end{lemma}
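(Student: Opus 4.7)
My plan is to split the proof into verifying correctness (all messages end up at $s$) and verifying the time bound $|S| \le D+k$. Both parts rest on applying Lemma~\ref{l:level} phase by phase: correctness follows from a simple invariant-style induction, while the length bound needs a two-pronged amortization that combines an upper bound on the number of phases with a conservation-of-messages argument.

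For correctness I would maintain the invariant that at the start of phase $i$ every one of the $k$ original messages is held by some node of $T_i$. Phase $0$ satisfies it because $T_0$ is built from shortest paths from all sources to $s$. For the inductive step, Lemma~\ref{l:level} guarantees that $S_i$ is a $(X_i,Y'_i,\partwo'_i)$ children-parents schedule, so each $x\in X_i$ delivers its accumulated packet to $\partwo'_i(x)\in Y'_i\subseteq T_{i+1}$. By the construction of $T_{i+1}$ we drop from $T_i$ only nodes in $X_i$ together with subsequently message-free leaves, so no message is stranded, and after phase $i$ all messages lie in $T_{i+1}$. The process halts when $T_m=\{s\}$ (every leaf of any $T_i$ carries a message and therefore is eventually transmitted upward), and the invariant then forces all messages to be located at $s$. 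One small detail to state explicitly: $\partwo'_i(x)$ need not equal $x$'s tree-parent, but it is always a neighbor of $x$ in $G$ lying in $Y_i\subseteq T_i$, which is all the invariant needs.

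For the length bound, Lemma~\ref{l:level} gives $|S_i|\le |X_i|-|Y'_i|+1$, so $|S|=\sum_{i=0}^{m-1}|S_i|\le m+\sum_{i=0}^{m-1}(|X_i|-|Y'_i|)$. I would bound the two terms separately. First, $m\le D$: every leaf of $T_i$ is in $X_i$ (leaves without messages were pruned when forming $T_i$ from $T_{i-1}$), so the deepest leaf of $T_i$ is removed at phase $i$ and $\mathrm{height}(T_{i+1})<\mathrm{height}(T_i)\le D$. Second, I would prove $\sum_i(|X_i|-|Y'_i|)\le k-1$ by a message-conservation argument: letting $A_i$ denote the set of nodes holding at least one original message at the start of phase $i$, we have $A_{i+1}\subseteq (A_i\setminus X_i)\cup Y'_i$, because any $v\in Z_{i+1}$ is a source and hence was already in $A_i$. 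Thus $|A_{i+1}|\le |A_i|-|X_i|+|Y'_i|$; telescoping with $|A_0|\le k$ and $|A_m|\ge 1$ yields the claimed bound, so $|S|\le D+k-1\le D+k$.

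The main obstacle is that a single phase can collapse many children onto few parents (Case~2 of Lemma~\ref{l:level} in particular re-routes children to previously-used parents), so $|S_i|$ cannot be charged to a single tree level or to a single message; the conservation inequality above is what balances this against the phase-count bound. The other easy-to-miss point is that the $Z_i$ nodes do not spoil the conservation: they are sources that have been carrying their own messages ever since round $0$, and were already in $A_i$ throughout the phases in which they sat inside $T_i$ without transmitting, so no double counting occurs when they eventually join some $X_j$.
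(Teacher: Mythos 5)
Your proposal is correct and follows essentially the same route as the paper: correctness by applying Lemma~\ref{l:level} phase by phase (you track a global invariant on where messages reside, the paper follows a single message up the tree), and the length bound from $|S_i|\le|X_i|-|Y'_i|+1$ together with $m\le D$ and a telescoping sum. Your ``conservation'' inequality $|A_{i+1}|\le|A_i|-|X_i|+|Y'_i|$ is just a repackaging of the paper's identities $|X_i|=|Y'_{i-1}|+|Z_i|$ and $|X_0|+\sum_i|Z_i|\le k$, yielding the same bound $k+D-1$.
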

\begin{proof}
    Let us analyze the path of some message $a$, located originally at a node $u_0$ of $T_0$, to the sink node $s$.
    \tj{The node $u_0$ eventually becomes a leaf in $T_i$ for some $i$, so $u_0\in X_i$ in phase $i$, and then the schedule $S_i$ obtained from Lemma~\ref{l:level} guarantees that $u_0$ will successfully transmit its message to a parent $u_1$ with $d(u_1)=d(u_0)-1$.}
    We can continue this reasoning with each next $u_l$ until the message of $u_0$ reaches the node $u_{d(u_0)}=s$.
    Thus, each message gets delivered to the source node.

    Now, let us analyze the length of $S$. 
    First observe that the number of phases is equal to the largest distance of a leaf of $T_0$ to $s$ which is $m\leq D$.
    %
    By Lemma~\ref{l:level}, the length of $S_i$ is at most $|X_i|-|Y'_{i}|+1$ for $i\ge 0$. 
    As $m \leq D$, $|X_0|+\sum_{i=1}^{m-1} |Z_i| \tj{\le} k$,\footnote{Recall the assumption that each pair of original messages are located in different nodes at the beginning.} 
    $|X_i| = |Y'_{i-1}|+|Z_i|$ 
    for $i\ge 1$, and $Y'_{m-1} = \{s\}$,
    we get the following bound on the size of $S$:
    $
    |S| = \sum_{i=0}^{m-1} |S_i| \leq \sum_{i=0}^{m-1} \left(|X_i|-|Y'_{i}|+1\right)  = 
    |X_0| + \sum_{i=1}^{m-1} |Z_i| - |Y'_{m-1}| + m \leq k + D - 1.$ \qed
\end{proof}

\subsubsection{Labeling scheme for $k$-gathering in $D+k$ rounds}
\label{s:gather:dplusk:distributed}

Now we will describe the labeling scheme for a distributed implementation of the above $O(D+k)$-round schedule for $k$-gathering. Then, in the next section, a $k$-gathering algorithm using this labeling scheme will be presented. 

Firstly, let us assume that $k \leq \Delta$. In this case we will encode behaviors of nodes in the centralized schedule presented earlier, using $O(\log k)$-bit labels.
Let $v$ be a node which sends its message in the $i$th phase, that is during a schedule $S_i$. 
Note that each node transmits only once during the above described centralized schedule. \tj{(This property follows from two facts. Firstly, each node with a message to send is a leaf in $T_i$ for exactly one value of $i$. Secondly,  a node is only included in $X_i$ when it is a leaf. So each node with a message to send transmits during exactly one schedule $S_i$. Moreover, by Lemma~\ref{l:level}, each node in $X_i$ transmits exactly once in $S_i$.)}
Let $t_v$ be the round number during $S_i$ when $v$ transmits a message and $T_i$ be the length of $S_i$. 
Let us assign unique identifiers from the set $\{1,\ldots,k\}$ to all original messages which are supposed to be gathered.
Let $\texttt{ID}_v$ be the identifier of the message, which $v$ gets at the latest, during the schedule $S_{i-1}$, according to the centralized algorithm or  $\texttt{ID}_v$ is the identifier of the message originally located in $v$ if $v$  is  is \tj{an element of $Z_i$.} 
Then the label of $v$ will be equal to $\mathcal{L}(v) = \left(0, \texttt{ID}_v, t_v, T_i\right)$, where 
the value $0$ at the first coordinate indicates that we are in the case $k \leq \Delta$. 
By Lemma~\ref{l:level} and the fact that there are only $k$ messages, the length of each schedule $S_i$ is $O(k)$ and thus the length of $\mathcal{L}(v)$ is $O(\log k)$.

Now, 
consider the case that $k > \Delta$. Firstly, let 
$c: V\to $ $\{0,1,\ldots,\Delta^2\}$ be a
$2$-hop $(\Delta^2+1)$-coloring  of the input graph. 
That is, $c(v)$ is the color of $v$. 
%
%
Let $v$ be a node which sends its message in the $i$th phase.
As $k>\Delta$, the length $T_i$ of $S_i$ in our centralized schedule might be much larger than $\Delta$ as well as the round $t_v$ of the transmission of a node during $S_i$. 
Therefore, to encode $T_i$ and $t_v$ using $O(\min(\log k,\log \Delta))=$ $O(\log \Delta)$ bits,
we will optimize the schedule $S_i$ in such way that its length is always $O(\Delta^2)$, using the following observation.
%
\begin{observation}\label{obs:max:len:sched}
    Each children-parents schedule can be optimized to be of length at most \tj{$\Delta^2+1$.}
\end{observation}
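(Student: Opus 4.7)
My plan is to construct a schedule of length $\Delta^2+1$ from scratch rather than to optimize a given one, by fixing a 2-hop coloring of the communication graph and letting each child transmit in the round corresponding to its color.

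Concretely, I would fix a proper 2-hop coloring $c : V \to \{0,1,\ldots,\Delta^2\}$ of $G$, i.e., a coloring of the square graph $G^2$ in which any two distinct vertices at graph distance at most $2$ receive different colors. Such a coloring exists because the maximum degree of $G^2$ is at most $\Delta + \Delta(\Delta-1) = \Delta^2$, so a greedy procedure uses at most $\Delta^2 + 1$ colors; this is the same $2$-hop $(\Delta^2+1)$-coloring already invoked earlier in the paper. I would then define a schedule $S'$ of length $\Delta^2 + 1$ in which, for each round $i$, exactly those nodes $x \in X$ with $c(x) = i$ transmit, and no other node transmits.

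The core verification is that $S'$ is an $(X, Y, \partwo)$ children-parents schedule in the sense of Definition~\ref{def:cps}. Fix an arbitrary $x \in X$ and consider round $c(x)$: I must show no other neighbor of $\partwo(x)$ transmits in that round. A neighbor in $Y$ is silent because $S'$ only fires elements of $X$ and $X \cap Y = \emptyset$ by Definition~\ref{def:cpa}; any node outside $X \cup Y$ is silent by construction; and any neighbor $u \in X \setminus \{x\}$ of $\partwo(x)$ satisfies $\mathrm{dist}_G(x, u) \le 2$ via the common neighbor $\partwo(x)$, so $c(u) \neq c(x)$ by the 2-hop property, and $u$ does not transmit in round $c(x)$. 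Therefore $\partwo(x)$ receives the message of $x$ cleanly, and since this argument applies to every $x \in X$, the schedule $S'$ has the required property.

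I do not expect a real obstacle here: the argument is essentially the standard translation of a 2-distance coloring into a collision-free transmission schedule. The only mildly subtle point is handling a potential interferer lying in $Y$, which is dispatched by the disjointness condition $X \cap Y = \emptyset$ built into the definition of a children-parents assignment.
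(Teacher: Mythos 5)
Your proposal is correct and matches the paper's own argument: the paper likewise replaces any overly long children-parents schedule with a round-robin over the colors of a $2$-hop $(\Delta^2+1)$-coloring, using the fact that two transmitters can collide only at a common neighbor, i.e., only if they are within distance $2$. Your additional details (the greedy bound on the chromatic number of $G^2$ and the explicit check that nodes in $Y$ stay silent) are fine but not a different route.
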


Observation~\ref{obs:max:len:sched} is derived from \tj{$2$-hop $(\Delta^2+1)$-coloring.} 
If a children-parents schedule is longer than $\Delta^2+1$, we can exchange it with the round-robin protocol on the values $c(v)$ of the $2$-hop $(\Delta^2+1)$-coloring. Given a $2$-hop coloring, nodes with the same color can transmit simultaneously such that all their neighbors receive collision-free messages transmitted to them. This fact follows from the observation that, in the radio network model, messages of two distinct transmitters $u,v$ might collide only at their common neighbor which implies that the distance between $u$ and $v$ is at most $2$ in such a case.

Let $S'_i$ be such an optimized schedule of length $\min(|S_i|,\Delta^2+1)$. 
Again, let $t_v$ be the unique time step during $S'_i$ when $v$ transmits a message and $T_i$ be the length of $S'_i$. 
Let $c_v$ be the color of a node that sends its message to $v$ at the latest during $S'_{i-1}$. 
Then the labeling for $v$ is equal to $\mathcal{L}(v) = \left(1, c_v, t_v, T_i\right)$, where 
$1$ at the first coordinate indicates that we are in the case 
\tj{$k>\Delta$.}
As the length of $\mathcal{L}(v)$ is $O(\log \Delta)$ in this case,
we get the following lemma.
\begin{lemma}\label{l:gather:labellength}
    The length of the labeling scheme is $O(\min{(\log \Delta, \log k)})$.
\end{lemma}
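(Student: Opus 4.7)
The plan is to bound the length of each of the four coordinates of the label $\mathcal{L}(v)$ separately in the two cases ($k \le \Delta$ and $k > \Delta$), and then observe that in each case the resulting bound equals $\min(\log k, \log \Delta)$ up to a constant factor.

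In the case $k \le \Delta$, the label has the form $(0, \texttt{ID}_v, t_v, T_i)$. The first coordinate is a single bit. The identifier $\texttt{ID}_v$ is taken from the set $\{1,\ldots,k\}$ of message identifiers, so it fits in $\lceil \log k \rceil$ bits. For the remaining two coordinates, I would invoke Lemma~\ref{l:level} to bound $|S_i| \le |X_i|+1$. Since in each phase $X_i$ is built only from nodes that either previously received a message or are sources of a new one, summing across phases gives $\sum_i |X_i| \le k + (\text{something bounded by } D)$; but the bound that actually matters here is the per-phase bound $|S_i| \le |X_i|+1 \le k+1$, which gives $T_i = O(k)$ and $t_v \le T_i = O(k)$. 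Therefore each of $t_v$ and $T_i$ is encodable in $O(\log k)$ bits, and the total label length is $O(\log k) = O(\min(\log k, \log\Delta))$ since $k \le \Delta$.

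In the case $k > \Delta$, the label has the form $(1, c_v, t_v, T_i)$. Here the key ingredient is Observation~\ref{obs:max:len:sched}: using the $2$-hop $(\Delta^2+1)$-coloring $c$, each children-parents schedule $S_i$ is replaced by an optimized version $S'_i$ of length at most $\Delta^2+1$. Hence $T_i = |S'_i| \le \Delta^2+1$ and $t_v \le T_i$, so both coordinates fit in $O(\log \Delta)$ bits. The color $c_v \in \{0,1,\ldots,\Delta^2\}$ also uses $O(\log\Delta)$ bits, and the flag bit costs $O(1)$. The total length is $O(\log\Delta) = O(\min(\log k,\log\Delta))$ since $\Delta < k$.

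Combining the two cases yields that the maximum label length in either scenario is $O(\min(\log k,\log\Delta))$, which is the claim. The main obstacle is really just keeping the case analysis clean: the bookkeeping for case 1 relies on the per-phase size bound coming from Lemma~\ref{l:level} (rather than a global sum), while case 2 relies on the fact that switching to the $2$-hop colored round-robin schedule only shortens schedules and never lengthens them, so the bounds on $t_v$ and $T_i$ truly drop to $O(\Delta^2)$.
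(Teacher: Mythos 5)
Your proof is correct and follows essentially the same route as the paper: a case split on $k\le\Delta$ versus $k>\Delta$, bounding $t_v$ and $T_i$ via the per-phase schedule-length bound from Lemma~\ref{l:level} in the first case and via the $2$-hop $(\Delta^2+1)$-coloring of Observation~\ref{obs:max:len:sched} in the second. (Your parenthetical aside that $\sum_i|X_i|\le k+O(D)$ is not accurate in general, but you correctly discard it in favor of the per-phase bound $|X_i|\le k$, which is all that is needed.)
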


\subsubsection{Distributed algorithm for $k$-gathering in $D+k$ rounds}\label{subsubsec:kgather:Dplusk}
Now, we will describe the distributed algorithm for $k$-gathering using labels constructed according to the labeling scheme described above. 
We 
consider two cases 
determined by the first coordinate of the label $\mathcal{L}(v)$ of each node $v\in V$.

Firstly consider the case that  $k \leq \Delta$.  
Each node from $X_i$ will, apart from all received messages from the current instance of the $k$-gathering problem, add the value of $T_0+\dots+T_{i}$ to its 
messages. Thus each node $v$ will be able to deduce the value of $T_0+\dots+T_{i-1}+t_v$ -- the round number when $v$ should transmit. Indeed, this value will be determined based on the content of received messages and the own label of the considered node $v$.
However, although the nodes from $X_i$ know $T_i$ from their labels, they do not know the sum $\sum_{j=0}^{i-1} T_i$. This value might be too large to encode it in the label of size $O(\min(\log\Delta,\log k))$. Therefore, we do not store it in the label of a node from $X_i$.
However, we show below that actually each node $v\in X_i$ will learn the value of $T_0+\cdots+T_{i-1}$ before the round of its actual unique transmission according to the presented above centralized schedule for $k$-gathering and therefore it can determine its transmission round and send $\sum_{i=0}^i T_i$ with its message to its parent.
 According to the centralized schedule, $v\in X_i$ should broadcast in the phase $i$. If $i=0$, $v$ will just broadcast in the round $t_v$ encoded in its label. If $i>0$ then there is some node $u$ relaying its messages to $v$ in phase $S_{i-1}$, along with the value of $T_0+\dots+T_{i-1}$. The node $v$ knows when it gathered all messages if it receives a message containing an ID equal to $\texttt{ID}_v$ from 
 the
 label $\mathcal{L}(v)$. Then $v$ knows that it must transmit in the next phase and $\sum_{j=0}^{i-1}T_j+t_v$ determines the exact round number of its transmission. \tj{Moreover, as $v$ knows $T_i$ from its label, it can determine $\sum_{j=0}^{i}T_j$ to be transmitted to its parent.}

We reason similarly in the case that
$k > \Delta$. Each node $v$ will, in addition to all transmitted messages, append to its broadcast messages the value of $T_0+\dots+T_{i}$. In this case however, it will also append $c(v)$. Thus the node $v$ knows it gathered all messages if it receives a message from a node with its color equal to $c_v$ from $v$'s label $\mathcal{L}(v)$. 
\tj{(Note that there can only be one such message, because the distance-two coloring guarantees that all of $v$'s neighbors have distinct colors.)}
The node $v$ thus proceeds as in the case that $k \leq \Delta$.

The above 
reasoning lead 
%
to the following theorem.
\begin{theorem}\label{t:k-gather}
	There is a $O(\min(\log k, \log \Delta))$-bit labeling scheme and a distributed algorithm for $k$-gathering using this labeling  running in at most $D+k$ rounds.
\end{theorem}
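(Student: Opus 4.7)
The plan is to combine the three components developed in this section: the centralized transmission scheme from Section~\ref{s:gather:dplusk:offline}, the labeling scheme from Section~\ref{s:gather:dplusk:distributed}, and the distributed simulation of Section~\ref{subsubsec:kgather:Dplusk}. The label length bound $O(\min(\log k, \log\Delta))$ is already established in Lemma~\ref{l:gather:labellength}, and the lemma of Section~\ref{s:gather:dplusk:offline} shows that the centralized schedule uses at most $D+k$ rounds and gathers all messages at the sink. What remains is to verify that each node, equipped only with its label and the messages it receives, transmits in exactly the global round prescribed by the centralized schedule.

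The main obstacle is that a node $v \in X_i$ must transmit at global round $\sum_{j=0}^{i-1} T_j + t_v$, yet the prefix sum $\sum_{j=0}^{i-1} T_j$ can be as large as $D+k$ and thus cannot be encoded in a label of length $O(\min(\log k, \log\Delta))$. I would address this by establishing, by induction on $i$, the invariant that every message transmitted during phase $i$ carries the current value $\sum_{j=0}^{i} T_j$ as a piggybacked field. The base case holds because source nodes know they start at round $0$ and read $t_v$ and $T_0$ from their labels. For the inductive step, a recipient $v \in X_i$ receives at least one phase-$(i-1)$ message carrying $\sum_{j=0}^{i-1} T_j$ by the hypothesis; adding $t_v$ from its label yields the global round of its own transmission, and appending $T_i$ (also in its label) re-establishes the invariant for phase $i$.

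The remaining delicate point is letting $v$ detect that it has just received its final incoming message, so the next eligible round is its own transmission slot. If $k \leq \Delta$, $v$ waits for a packet whose identifier equals $\texttt{ID}_v$; by the choice of $\texttt{ID}_v$ in the labeling scheme this is exactly the last message collected by $v$ in the centralized schedule. If $k > \Delta$, the trigger is an incoming packet whose piggybacked $2$-hop color equals $c_v$, and the $2$-hop property of the coloring guarantees that at most one of $v$'s neighbors has this color, so the match is unambiguous. In either regime, once the trigger arrives $v$ computes its global transmission round from the invariant and appends the updated prefix sum to its outgoing packet, preserving the invariant. Putting the pieces together, the distributed algorithm reproduces round-for-round the successful transmissions of the centralized schedule, so correctness and the $D+k$ time bound follow, while Lemma~\ref{l:gather:labellength} supplies the labeling bound.
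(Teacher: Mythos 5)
Your proposal is correct and follows essentially the same route as the paper: it combines the centralized schedule of length at most $D+k$, the labeling scheme of Lemma~\ref{l:gather:labellength}, and the distributed simulation in which each node learns the prefix sum $\sum_{j<i} T_j$ from piggybacked fields and detects its last incoming message via $\texttt{ID}_v$ (when $k\le\Delta$) or the $2$-hop color $c_v$ (when $k>\Delta$). The only detail worth noting is that for $k>\Delta$ the simulated schedule is the optimized $S'_i$ of length $\min(|S_i|,\Delta^2+1)$ rather than $S_i$ itself, which only shortens the total and so preserves the $D+k$ bound.
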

Note that the upper bound on the number of rounds 
just $D+k$, not only an asymptotic $O(D+k)$. This bound matches the lower bound from Section~\ref{subsec:kgather:lower}.

\subsection{$k$-gathering for $k=\omega(D\Delta)$}\label{subsec:k:gather:big:k}
Now, we give an algorithm for the specific case that $k$ is large, $k=\omega(D\Delta)$.

%
For a given rooted tree $T$, let $\parent_T(v)$ denote the parent of the node $v$ in $T$. If the considered tree is clear from the context, we write $\parent(v)$ instead of $\parent_T(v)$.

Lemma 6 from \cite{DBLP:journals/iandc/GanczorzJLP23}
states that there is an efficient way to assign the value $s(v) \in [0, \Delta-1]$ to each node $v$ of any BFS tree $T$ so that \\
    (a)~$s(u)\neq s(w)$ for different children $u,w$ of a node of $v$ in $T$,\\
    (b)~there is no edge between $u$ and $\parent(w)$ for any two nodes $u\neq w$ on the same level of $T$ such that $s(u)=s(w)$. \\
Now, consider a fixed BFS tree with the sink node $s$ as the root of $T$.
Then, let $T'$ be the smallest with respect to set inclusion subtree of $T$ which contains the root $s$ of $T$ and all the sources $s_1,\ldots,s_k$. Note that each leaf of $T'$ is a source vertex.
Let $s\,:V\,\to [0,\Delta-1]$ be a function which satisfies the above properties (a)--(b).
Let $c\,:\,V\to[0,\Delta^2]$ be a $(\Delta^2+1)$ distance-two coloring of the input graph.
Let the label of a node $v$ consist of the values of $s(v)$, $c(v)$, $c(\parent(v))$, $\Delta$, the number of children of $v$ in $T'$, and $l(v)\in[0,2]$ which is equal to the  level of $v$ in $T$ (i.e., its distance to the root of $T$) mod 3.

Now, we describe a $k$-gathering algorithm $A$ using the above 
labels. An execution 
is split into phases consisting of three rounds $\{0, 1, 2\}$. If 
$v$ is supposed to send a message in a given phase, then it transmits 
in the round $l(v)$ of the phase. In this way 
messages 
from different levels do not collide with each other.

At the beginning of 
the algorithm, only the leaves of $T'$ are active. A leaf 
$v$ transmits its message in the phase $s(v)$. 
Each non-leaf node $v$ (i.e., the number of its children stored in its label is larger than $0$) 
does not send any message, until messages from all its children are received. If $v$ receives its last message from its child in phase $t$, it transmits all received messages in the closest phase $t'>t$ such that $t'\mod \Delta=s(v)$ and it remains inactive in other phases. 
Apart from the original messages to be gathered $s_1,\ldots,s_k$, each node $v$ adds to a transmitted message the values of $c(v)$ and $c(\parent(v))$.
%
Note that each node is able to distinguish messages received from its children from other messages 
thanks to the fact that $c(v)\neq c(w)$ if $w\neq v$ is the parent of 
\tj{a neighbor}
$u$ of $v$.
Moreover, $v$ is able to determine when it receives the last message from its children thanks to the fact that it stores the number of its children in the label and it can distinguish messages from different children, as $c(u)\neq c(u')$ for the children $u\neq u'$ of $v$.

\noindent An inductive proof wrt to the levels of $T'$ leads to 
Lemma~\ref{lem:DDelta} (see Appendix~\ref{sec:lem:DDelta}).

\begin{lemma}\label{lem:DDelta}
	The algorithm $A$ is a $k$-gathering algorithm which works in time $O(D \Delta)$ with labels of size $O(\log \Delta)$.
\end{lemma}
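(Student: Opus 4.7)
The plan is to verify the three claims: the label length is $O(\log \Delta)$, every source message is eventually received at the sink, and the algorithm terminates within $O(D\Delta)$ rounds. The label-size bound is immediate: each of $s(v)$, $c(v)$, $c(\parent(v))$, the number of children of $v$ in $T'$ (at most $\Delta$), and the parameter $\Delta$ requires $O(\log \Delta)$ bits, while $l(v)\in\{0,1,2\}$ needs only $O(1)$ bits, so the total length is $O(\log \Delta)$.

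The heart of the argument is showing that every scheduled transmission is received correctly at its intended parent; once this is established, the aggregation mechanism already explained in the excerpt (using distinct colors among a node's children and between a node and the parent of any neighbor) guarantees that every source message propagates upward to $s$. Fix a node $v$ at level $\ell$ in $T$ transmitting in some phase $t$, and let $p=\parent(v)$. Any colliding transmitter $w$ must be a neighbor of $p$, hence at level $\ell-2$, $\ell-1$, or $\ell$. Since $v$ transmits in round $\ell \bmod 3$ of its phase and the three values $\ell-2,\ell-1,\ell$ are pairwise distinct modulo $3$, any $w$ at a level different from $\ell$ transmits in a different round of the phase and cannot collide. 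So $w$ must lie at level $\ell$, and in order to transmit in the same phase $t$ we need $s(w)=s(v)$. Property~(a) of $s$ forbids this if $w$ is a sibling of $v$; otherwise property~(b) forbids the edge $(w,p)$, so $w$ is not a neighbor of $p$ at all. Hence no collision can occur at $p$.

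Correctness and the $O(D\Delta)$ running time follow by induction on the level $\ell$ of $T'$, processed from the deepest level upwards. Let $L$ denote the depth of $T'$ in $T$. The inductive claim is that every node $v$ at level $\ell$ of $T'$ has transmitted its aggregated message by phase $(L-\ell+1)\Delta-1$. For a leaf $v$ of $T'$ this is immediate, as it transmits in phase $s(v)\le \Delta-1\le (L-\ell+1)\Delta-1$. For a non-leaf $v$ at level $\ell$, the induction hypothesis applied to each child (which lies at level $\ell+1$ in $T'$) shows that $v$ has received every child's aggregated packet by phase $(L-\ell)\Delta-1$, and $v$ knows it is done by counting the distinct child-colors received against the number-of-children field in its label. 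The algorithm then transmits in the first phase $t'>(L-\ell)\Delta-1$ with $t'\bmod \Delta=s(v)$; since any $\Delta$ consecutive phases exhaust all residues modulo $\Delta$, we obtain $t'\le(L-\ell+1)\Delta-1$. Applying the claim with $\ell=1$ shows that the sink $s$ receives every message by phase $L\Delta-1\le D\Delta-1$, and since each phase uses $3$ rounds the total running time is $3L\Delta=O(D\Delta)$.

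The principal obstacle is the non-collision analysis, which must exploit properties~(a) and~(b) of $s$ together with the level-based phase splitting simultaneously to rule out every possible colliding neighbor of $p$. Once this is in hand, the amortized $\Delta$-phase wait at each internal node yields the time bound, and the label-size claim follows by inspection.
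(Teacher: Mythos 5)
Your proof is correct and follows essentially the same route as the paper's: an induction from the deepest level of $T'$ upward with a budget of $\Delta$ phases per level, combined with a collision analysis that uses property~(a) for siblings, property~(b) for same-level non-siblings, and the level-mod-3 round assignment for transmitters on adjacent levels. Your write-up is if anything slightly more explicit than the paper's about why no neighbor of the receiving parent can collide.
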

\newcommand{\prooflemDDelta}{
\begin{proof}
Let $D'$ be the largest level of $T'$. We will show by induction that each node on the level $l\le D'$ receives messages from all its children until the phase $(D'-l)\Delta$ of an execution of the algorithm $A$. 
For the base case consider nodes on the level $l=D'$. They are leaves of $T'$ thus they do not have children and therefore one can say that they received all messages from their children until the phase $(D'-l)\Delta=$ $0$.
Now, let us make an inductive assumption that all nodes on the levels larger than $l$ for some $l<D'$ received messages from their children until the phase $(D-(l+1))=$ $(D-l-1)\Delta$.
Let $v$ be any node on the level $l$.
Thus, each child $u$ of $v$ transmits its unique message (containing all source messages from its subtree) in such phase $t'$ that $t'\text{ mod } \Delta=$ $s(u)$ and $t'<(D-l-1)\Delta+\Delta=$ $(D-l)\Delta$.
Recall that the properties (a) and (b) of the assignment $s\,:\,V\to [0,\Delta-1]$ guarantee that $s(u)\neq s(w)$ for each other node $w$ on the level $l+1$ which is a neighbor of $v$.
Thus, the message transmitted by $u$ in the phase $t'$ is the unique message transmitted in that phase by a neighbor of $v$ from the level $l+1$. 
Moreover, the other possible neighbors of $v$ belong to the levels $l$ and $l-1$. As the nodes from the levels $l+1,$ $l$ and $l-1$ transmit is separate rounds of the phase $t'$, the message from $u$ is successfully delivered to $v$. This observation concludes the inductive step of the proof.
Thus finally, the $k$-gathering task is accomplished in at most $D\Delta$ phases which gives $O(D\Delta)$ rounds.\qed
\end{proof}
}

\subsection{Summary 
and separation from the model without advice}\label{subsec:k:gather:sum}

Theorem~\ref{t:k-gather} and Lemma~\ref{lem:DDelta}
lead to the following result.
\begin{corollary}
    There exists a distributed algorithm for $k$-gathering in time \\$O\left(\min(D+k,D\Delta)\right)$ 
    with 
    a labeling scheme of optimal length $O(\min(\log\Delta,\log k))$.
\end{corollary}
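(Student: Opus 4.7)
The plan is to combine the two algorithms just obtained---the $(D+k)$-round algorithm from Theorem~\ref{t:k-gather} and the $O(D\Delta)$-round algorithm from Lemma~\ref{lem:DDelta}---via a single selector bit prepended to every label. Because the labels are produced by an oracle that knows the entire network, the oracle can compute both running-time estimates, pick whichever is smaller, and construct the corresponding labeling scheme. Each node will then read its selector bit and simulate the chosen algorithm using the remaining bits of its label.

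Correctness in each of the two cases will follow immediately from Theorem~\ref{t:k-gather} and Lemma~\ref{lem:DDelta}, respectively, and the running time is by construction $O(\min(D+k,D\Delta))$. The main point to verify is that the overall label length is $O(\min(\log \Delta, \log k))$. In the regime where the $(D+k)$-round algorithm is selected, Lemma~\ref{l:gather:labellength} already delivers the bound $O(\min(\log k,\log \Delta))$. In the remaining regime, the selector picks the $O(D\Delta)$-round algorithm precisely when $D+k = \Omega(D\Delta)$, which forces $k = \Omega(D\Delta) \ge \Delta$; hence $\log \Delta \le \log k$ and the $O(\log \Delta)$ labels of Lemma~\ref{lem:DDelta} coincide with $O(\min(\log k,\log \Delta))$. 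The extra selector bit contributes only $O(1)$.

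Optimality of the label length is handed to us directly by Theorem~\ref{th:kgathering:lower}, which already establishes $\min(\log \Delta,\log k)$ as a lower bound, matching our upper bound up to constants. The only genuinely non-routine point in the whole argument is the case analysis on label size in the high-$k$ regime, but the implication $k = \Omega(D\Delta)\Rightarrow k\ge \Delta$ makes the two label-size bounds collapse into the desired one, so no real obstacle is expected beyond carefully stating the selector convention and reusing the lemmas.
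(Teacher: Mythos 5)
Your proposal is correct and matches the paper's (essentially unstated) argument: the corollary is obtained simply by combining Theorem~\ref{t:k-gather} with Lemma~\ref{lem:DDelta}, letting the oracle select the faster of the two algorithms. Your observation that the $O(\log\Delta)$ labels of the $O(D\Delta)$-round algorithm are only invoked when $k=\Omega(\Delta)$, so that they coincide with $O(\min(\log k,\log\Delta))$, is exactly the point the paper relies on (it restricts that algorithm to the regime $k=\omega(D\Delta)$).
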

On the other hand, we prove (Appendix~\ref{proofcorseparation}) the following result.
\begin{corollary}\label{cor:separation}
    The distributed deterministic $k$-gathering problem with arbitrary distinct IDs of nodes requires $\Omega(D\log n + k\log(n/k))$ rounds.
\end{corollary}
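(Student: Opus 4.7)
The plan is to combine two classical radio-network lower bounds, each exhibited on a separate instance of the $k$-gathering problem.

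For the $\Omega(D\log n)$ term, I would take $k=1$ and invoke the classical $\Omega(D\log n)$ lower bound for deterministic broadcasting in radio networks with arbitrary pairwise-distinct identifiers (e.g.\ Bruschi--Del Pinto). The hard graph family underlying that result has a designated far node that receives the broadcast last, and any algorithm spends $\Omega(D\log n)$ rounds reaching that specific node. Making this node the sink of a $1$-gathering instance and the broadcast source the single source $s_1$ immediately transfers the bound: any $1$-gathering algorithm on this graph has to make $s_1$'s message arrive at the sink, which is exactly the bottleneck event used in the broadcast lower bound. Hence the bound carries over to $k$-gathering for every $k\ge 1$.

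For the $\Omega(k\log(n/k))$ term, I would reuse the family $G_{D,p}$ from the proof of Theorem~\ref{th:kgathering:lower} with $p=n-D$, so that $v_D$ has $n-D$ pendant neighbors. Out of these the adversary chooses $k$ as sources, and the pendants' identifiers are assigned adversarially from a universe of size $n-D$. Since the pendants form a single-hop radio network with $v_D$ as their unique neighbor, delivering all $k$ source messages to $v_D$ is precisely the classical $k$-selection problem in a single-hop radio network, for which the lower bound $\Omega(k\log((n-D)/k))=\Omega(k\log(n/k))$ is known (Greenberg--Winograd, Clementi--Monti--Silvestri). No $k$-gathering algorithm on this graph can deliver a source message to the sink without first delivering it to $v_D$, so the same lower bound transfers.

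Taking the worse of the two instance lower bounds and using the elementary inequality $\max(a,b)\ge (a+b)/2$ for nonnegative $a,b$ yields the claimed asymptotic bound $\Omega(D\log n+k\log(n/k))$. The main technical obstacle is essentially bookkeeping: one has to check that both classical lower bounds, originally formulated in slightly different variants of the model, carry over cleanly to the precise setting of this paper (undirected radio networks, non-spontaneous wake-up, pairwise-distinct but arbitrary labels, unbounded-size messages), and that corner parameter regimes such as $k$ very close to $n$, or $D$ close to $n$, do not make the resulting bound vacuous. Both checks should be routine adaptations of the classical arguments rather than a new construction.
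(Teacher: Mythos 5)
Your proposal is correct and follows essentially the same route as the paper: the $\Omega(D\log n)$ term via the Bruschi--Del Pinto layered-graph broadcast lower bound reinterpreted as a $1$-gathering instance, and the $\Omega(k\log(n/k))$ term via the selective-family lower bound on a star of pendants attached to the sink's side. The only cosmetic difference is that you take the maximum over two separate instances, whereas the paper glues the star onto the layered graph to obtain a single hard instance; both yield the stated asymptotic bound.
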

\newcommand{\proofcorseparation}{
\begin{proof}
Bruschi et al.\ \cite{BruschiP97} showed that the deterministic broadcasting problem with unique labels requires $\Omega(D\log n)$ rounds. In the proof they build a family of graphs $C_D^n$ for fixed $D<n/2$ such that:
\begin{itemize}
    \item each graph from $C_D^n$ is a layered graph, i.e., the set of nodes $V$ is split into subsets $V_0, V_1,\ldots,V_D$ such that $V_0=\{s\}$, each edge $(u,v)$ connects nodes from consecutive layers, i.e., $u\in V_i$, $v\in V_{i+1}$ for some $i<D$,
    \item for each $u,v\in V_i$ the set of neighbors of $u$ in $V_{i-1}$ and the set of neighbors of $v$ in $V_{i-1}$ are equal.
\end{itemize}
Then, they show that broadcasting requires $\Omega(D\log n)$ rounds for this family of graphs. Observe that the above properties imply that, in each algorithm 
\begin{itemize}
    \item the broadcast message originally given only to $s$ is delivered to all the nodes from $V_i$ at the same round, by the same node from $V_{i-1}$,
    \item the broadcast message can be delivered to any node from $V_i$ only if it was delivered to all nodes from $\bigcup_{j=0}^{i-1} V_j$ earlier
\end{itemize}
for each graph from $C_D^n$. So let us consider the $1$-gathering problem with the source node $s_1$ equal to the only node from $V_0$ and the sink node equal to an arbitrary node from $V_D$. Then, a solution of our instance of the $1$-gathering problem requires broadcasting of the message originally available by $s_1$ only. And using Theorem~3.1 from \cite{BruschiP97}, we obtain the lower bound on $1$-gathering $\Omega(D\log n)$. 

Now, consider a star graph with $k$ sources $s_1,\ldots,s_k$ connected with the sink node $s$. 
If the IDs of nodes might be chosen as arbitrary distinct values from $[n]$ then time needed for delivery of even one of the messages located in the nodes $s_{1},\ldots,s_{k}$ to $s$ corresponds to the size of the $(n,k)$-selective family defined in \cite{ClementiMS03}\footnote{The $(n,k)$-selective family is a family $\mathcal{F}$ of subsets of $[n]$ such that for every non empty subset $Z$ of $[n]$ such that $|Z|\le k$, there is a set $F$ in $\mathcal{F}$ such that $|Z\cap F|=1$; see Def. 2.1 in \cite{ClementiMS03}}. As shown in Theorem 3.3 in \cite{ClementiMS03}, the size of such a family is $\Omega(k \log(n/k) )$.  

We can combine the lower bounds from the previous two paragraphs in the following way. First consider a star graph with the sources $s_1,\ldots,s_k$ and the sink node $s$. Then let the sink $s$ be the source $s\in V_0$ of a graph from the family $C_D^{n-k}$.
The lower bounds from the previous two paragraphs imply the bound $\Omega(D\log n + k\log(n/k))$.
\qed
\end{proof}
}

\noindent As our $k$-gathering algorithm with optimal advice works in $O(D+k)$ rounds,
we obtain a $\Omega(\log n)$ gap between the algorithms with optimal 
advice and the ones 
with arbitrary distinct IDs for each $k=O(n^\eps)$, where $0<\eps<1$ is a constant.

\section{The $k$-broadcasting problems}\label{sec:k:broadcast}

A simple composition of a $k$-gathering algorithm and a single-message broadcasting algorithm lead to the following corollaries (see Appendix~\ref{sec:proof:cor:unit:broadcast} and \ref{sec:proof:cor:multi:broadcast}).
\newcommand{\proofcorunitbroadcast}{
\begin{proof}   
Consider an instance of the unit-message broadcasting problem with the source nodes $s_1,\ldots, s_k$. Let us conceptually add a new node $s$ to the communication graph 
and the edges $\{(s,s_i)\,|\, i\in[1,k]\}$. Then, let us consider an instance of the standard broadcasting problem (i.e., with one source node) with the source node $s$.
The algorithm for broadcasting with advice of size $O(1)$ and time $O(D+\log^2 n)$ from \cite{DBLP:journals/corr/abs-2410-07382} assumes that the source node sends the broadcast message in the first round and it remains inactive afterwards. Thus, we can simulate such an algorithm in our instance of the unit-message $k$-broadcasting such that the nodes just assume that they received the broadcast message in the first round from the (non-existing, in reality) source node $s$ and continue the standard broadcasting algorithm in consecutive rounds.
\qed
\end{proof}
}
\begin{corollary}\label{cor:unit:broadcast}
    There exists a unit-message $k$-broadcasting algorithm with advice size $O(1)$ which works in $O(D+\log^2 n)$ rounds.
\end{corollary}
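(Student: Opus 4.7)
The plan is to reduce the unit-message $k$-broadcasting problem to the standard single-source broadcasting problem, and then invoke the $O(D+\log^2 n)$-round broadcasting algorithm with $O(1)$-bit advice from \cite{DBLP:journals/corr/abs-2410-07382}. The key observation is that in the unit-message setting all $k$ sources $s_1,\dots,s_k$ start with the \emph{same} broadcast message $M$, so they can collectively act as if they had just received $M$ from a common virtual source at round $0$.

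Concretely, I would first conceptually augment the communication graph $G=(V,E)$ with a new virtual node $s$ connected to each source: $G'=(V\cup\{s\}, E\cup\{(s,s_i)\mid i\in[k]\})$. The diameter of $G'$ is at most $D+1=O(D)$. The standard broadcasting algorithm of \cite{DBLP:journals/corr/abs-2410-07382} has the property that its source transmits $M$ to its neighbors in round $1$ and then stays silent forever; from round $2$ onwards, every other node uses only $O(1)$ bits of advice to propagate $M$ optimally. Because each $s_i$ already knows $M$ at round $0$, the nodes in $V$ can simulate the behavior of the source node $s$ of the virtual broadcasting instance for free: after round $1$, the state of the system is indistinguishable from the state after round $1$ of the standard broadcasting algorithm run in $G'$ starting at $s$. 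From round $2$ onwards we simply run the standard algorithm in $G$.

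For the labeling, each node in $V$ receives exactly the $O(1)$-bit advice that it would have received in the standard broadcasting instance on $G'$ rooted at $s$; in particular the sources $s_1,\dots,s_k$ do not need any extra bits to learn that they are sources, since they are the only nodes initially holding $M$ and they mimic the post-round-$1$ behavior of the virtual source's neighbors. This yields an $O(1)$-bit labeling scheme. Since broadcasting in $G'$ completes in $O(D'+\log^2 n)=O(D+\log^2 n)$ rounds, and our simulation costs no additional rounds (we simply relabel what would be round $t\ge 2$ of the virtual algorithm as round $t-1$ of ours), the overall time bound is $O(D+\log^2 n)$.

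The only subtle point — and the only place where something could go wrong — is verifying that the simulation is faithful: we must check that the algorithm of \cite{DBLP:journals/corr/abs-2410-07382} truly uses its source only as a one-shot transmitter in round $1$, and that all the round-$1$ transmissions from $s$ to $s_1,\dots,s_k$ in the virtual instance can be replaced by each $s_i$ simply ``knowing'' $M$ at time $1$ without any further coordination. Since the sources in our instance are awake at round $0$ and already hold $M$, and no collisions occur in round $1$ because $s$ is the unique transmitter in the virtual instance, this replacement is immediate and introduces no new interference.
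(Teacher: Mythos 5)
Your proposal is correct and follows essentially the same route as the paper: both add a virtual source $s$ adjacent to all $k$ sources, exploit the fact that the broadcasting algorithm of \cite{DBLP:journals/corr/abs-2410-07382} uses its source only as a one-shot round-one transmitter, and let each $s_i$ simulate having received $M$ from $s$ before continuing the standard algorithm. No gaps.
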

\newcommand{\proofcormultibroadcast}{
\begin{proof}   
Now, consider 
the multi-message broadcasting problem with the source nodes $s_1,\ldots, s_k$. Consider the following solution. First, we execute a $k$-gathering algorithm with an arbitrary sink vertex $s\in V$ and the sources $s_1,\ldots,s_k$. (The sink node $s$ is chosen by an oracle which builds the labels for nodes of the graph.)
This algorithm takes $O(D+k)$ rounds and requires $O(\min(\log k,\log\Delta))$-bit labels.
Just after reception of all messages from its children, the node $s$ starts an execution of the standard broadcasting algorithm  with advice size $O(1)$ and time $O(D+\log^2 n)$ from \cite{DBLP:journals/corr/abs-2410-07382}. This leads to the statement of the corollary.
\qed\end{proof}}
\begin{corollary}\label{cor:multi:broadcast}
    There exists a multi-message $k$-broadcasting algorithm with the  size of the advice $O(\min(\log k, \log\Delta))$ which works in $O(D+\log^2n+k)$ rounds.
\end{corollary}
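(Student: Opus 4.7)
The plan is to straightforwardly compose the two algorithms already available to us: the $O(D+k)$-round $k$-gathering algorithm from Theorem~\ref{t:k-gather} and the $O(D+\log^2 n)$-round, $O(1)$-advice single-source broadcasting algorithm from \cite{DBLP:journals/corr/abs-2410-07382}. First, I would let the label-designing oracle pick an arbitrary sink vertex $s\in V$ (for instance, any fixed node of the graph). The oracle then constructs labels in two independent parts: the $O(\min(\log k,\log \Delta))$-bit $k$-gathering label from Theorem~\ref{t:k-gather} (with sources $s_1,\dots,s_k$ and sink $s$) and, concatenated with it, the $O(1)$-bit label required by the standard broadcasting algorithm with $s$ as the source. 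The overall label length is $O(\min(\log k,\log\Delta))+O(1)=O(\min(\log k,\log\Delta))$, as claimed.

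For the algorithm itself, the nodes first execute the $k$-gathering algorithm of Theorem~\ref{t:k-gather}. After at most $D+k$ rounds, the sink $s$ has received all source messages $M_1,\dots,M_k$. At this moment $s$ starts the $O(D+\log^2n)$ single-source broadcasting algorithm, playing the role of the unique source with the concatenated packet $\langle M_1,\dots,M_k\rangle$ as the message to be disseminated; all other nodes, upon receiving any broadcast-phase message, join in as specified by that algorithm. After $O(D+\log^2 n)$ additional rounds every node in $V$ has received the concatenated packet and thus all the messages $M_1,\dots,M_k$. The total running time is $O(D+k)+O(D+\log^2 n)=O(D+\log^2n+k)$.

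The only subtlety is synchronization between the two phases, since the broadcasting algorithm of \cite{DBLP:journals/corr/abs-2410-07382} assumes that the source transmits in round~$0$ while every other node is initially silent, waiting to be woken up by an incoming message. This is handled exactly as discussed in the model section: because all nodes are awakened by messages tagged with the central clock, each node knows the current round; thus $s$ can be instructed (via its $O(1)$ broadcasting advice, together with the fact that it knows when its own gathering is complete) to start the broadcasting algorithm at a predetermined round $T=O(D+k)$, and every other node treats that round as the round~$0$ of the broadcasting subroutine. Since in the non-spontaneous wake-up model only nodes that have received a message participate in broadcasting, no spurious transmissions occur before $s$ initiates the second phase. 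I do not expect any genuine obstacle here; the main (minor) care needed is this clock alignment between the two phases, which is purely bookkeeping and does not affect either the label length or the asymptotic running time.
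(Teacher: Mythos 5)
Your proposal is correct and follows essentially the same route as the paper: run the $O(\min(\log k,\log\Delta))$-advice, $(D+k)$-round $k$-gathering of Theorem~\ref{t:k-gather} to an oracle-chosen sink $s$, then have $s$ act as the single source for the $O(1)$-advice, $O(D+\log^2 n)$-round broadcasting algorithm of \cite{DBLP:journals/corr/abs-2410-07382}. The only cosmetic difference is that the paper has $s$ start broadcasting immediately upon receiving its last gathered message rather than at a fixed round $T$, but this does not change the label length or the asymptotic time bound.
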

\tj{Let us also note here that the $\Omega(D+\log^2n)$ lower bound holds even for centralized unit-message broadcasting.}



\newcommand{\openproblems}{
\section{Open Problems}\label{sec:summary}

An interesting further research direction is to determine optimal length of labeling schemes for $k$-gathering and multi-message $k$-broadcasting in the model without possibility of combining many source messages together. 
}
\openproblems

\vspace*{4pt}
\noindent\textbf{Acknowledgments}
{Supported by the Polish National Science Centre grant 2020/39/B/ST6/03288.}

 \bibliographystyle{splncs04}

\bibliography{references}

\begin{thebibliography}{8}
\bibitem{ref_article1}
Author, F.: Article title. Journal \textbf{2}(5), 99--110 (2016)

\bibitem{ref_lncs1}
Author, F., Author, S.: Title of a proceedings paper. In: Editor,
F., Editor, S. (eds.) CONFERENCE 2016, LNCS, vol. 9999, pp. 1--13.
Springer, Heidelberg (2016). \doi{10.10007/1234567890}

\bibitem{ref_book1}
Author, F., Author, S., Author, T.: Book title. 2nd edn. Publisher,
Location (1999)

\bibitem{ref_proc1}
Author, A.-B.: Contribution title. In: 9th International Proceedings
on Proceedings, pp. 1--2. Publisher, Location (2010)

\bibitem{ref_url1}
LNCS Homepage, \url{http://www.springer.com/lncs}, last accessed 2023/10/25
\end{thebibliography}

\remove{

}

\newpage

\appendix

\noindent{\Large\textbf{Appendix}}

\section{An illustration of the graph $G_{D,p}$}\label{sec:fig:kgathering:lower}
\figkgatheringlower

\section{A figure illustrating an application of Lemma~\ref{l:level} in the $k$-gathering $O(D+k)$-round algorithm} \label{sec:app:figure}

\figparschedulestage

\section{Proof of Lemma~\ref{lem:DDelta}}\label{sec:lem:DDelta}

\prooflemDDelta

\section{Proof of Corollary~\ref{cor:separation}}\label{proofcorseparation}

\proofcorseparation

\section{Proof of Corollary~\ref{cor:unit:broadcast}}\label{sec:proof:cor:unit:broadcast}

\proofcorunitbroadcast

\section{Proof of Corollary~\ref{cor:multi:broadcast}}\label{sec:proof:cor:multi:broadcast}

\proofcormultibroadcast

\end{document}